\theoremstyle{theorem}
\newtheorem{theorem}{Theorem}
\theoremstyle{definition}
\newtheorem{definition}{Definition}
\theoremstyle{remark}
\theoremstyle{corollary}
\theoremstyle{lemma}
\theoremstyle{proposition}
\begin{document}
\baselineskip = 1.4\baselineskip

\title{Social preferences and expected utility}
\author{Mehmet S. Ismail\footnote{Department of Political Economy, King's College London, London, WC2R 2LS, UK.  mehmet.ismail@kcl.ac.uk}
\and Ronald Peeters\footnote{Department of Economics, Otago Business School, University of Otago, Dunedin 9054, New Zealand. ronald.peeters@otago.ac.nz}}

\date{\today}
\maketitle

\begin{abstract}
\baselineskip = 1.4\baselineskip
\noindent
It is well known that ex ante social preferences and expected utility are not always compatible. In this note, we introduce a novel framework that naturally separates social preferences from selfish preferences to answer the following question: What specific forms of social preferences can be accommodated \textit{within} the expected utility paradigm? In a departure from existing frameworks, our framework reveals that ex ante social preferences are not inherently in conflict with expected utility in games, provided a decision-maker's aversion to randomization in selfish utility ``counterbalances'' her social preference for randomization. We also show that when a player's preferences in both the game (against another player) and the associated decision problem (against Nature) conform to expected utility axioms, the permissible range of social preferences becomes notably restricted. Only under this condition do we reaffirm the existing literature's key insight regarding the incompatibility of ex ante inequality aversion with expected utility.

\medskip
\noindent 
\emph{JEL}: C72, D81, D91

\medskip
\noindent 
\emph{Keywords}: Social preferences, expected utility, ex ante inequality aversion, noncooperative games 
\end{abstract}

\newpage
\section{Introduction}

The impact of social preferences on individual decision-making in economics has been an area of extensive research, initiated by seminal contributions by \citet{fehr1999} and \citet{bolton2000}. These early efforts have primarily focused on deterministic settings, leaving open questions about how social preferences operate under uncertainty---an essential aspect of many economic activities. The literature has subsequently evolved to address social preferences under uncertainty. Notable contributions to this line of research come from both theoretical \citep{diamond1967,machina1989,karni2002,trautmann2009,fudenberg2012,saito2013} and experimental \citep{karni2008,bohnet2008,bolton2010,cettolin2017,miao2018} perspectives. For instance, \citet{saito2013} introduces axioms that give rise to expected inequality averse utility function. However, there remains a gap in the literature regarding how selfish and ex ante social preferences interact \textit{within} the expected utility framework in the sense of \citet{neumann1944}. 

In their seminal paper, \citet{gilboa2003} offer a hint on how to distinguish between social and selfish preferences:
\begin{quote}
``[The dictator] in a dictator game cannot be assumed to treat the outcome ``I get \$90'' as equivalent to ``I chose to take \$90 and to leave \$10 to my opponent.'' Preferences over fairness distinguish the former from the latter ... Such distinctions are precisely about the \textit{difference} between a single-player decision problem and a game.'' \citep[pp.\ 185--186; emphasis added]{gilboa2003}
\end{quote}

\noindent
They go on to axiomatize expected utility in the context of a game, stating:
\begin{quote} 
``We do not claim that the expected utility paradigm is broad enough to encompass all types of psychological or social payoffs.'' (ibid., p.\ 187)
\end{quote}

\noindent
This raises the question: What specific forms of social payoffs can be accommodated within the expected utility paradigm? To address this question and fill the existing gap in the literature, we introduce a novel framework to naturally separate social preferences from selfish preferences, as suggested by the first quotation above. We consider a two-person game and an associated one-person decision problem---i.e., a game against Nature---that is uniquely induced by the game. We design the decision problem in such a way that any variation in the preferences of the player in the two-person game versus the one-person decision problem can be attributed solely to the player's social preferences. This is possible because the two scenarios are identical in all aspects except for the presence of another player in the game. To illustrate, consider the aforementioned example: in one scenario, a decision-maker receives \$90 in a decision problem; in another scenario, she receives \$90 but also allocates \$10 to another player in a game. By holding all other factors constant between the two scenarios, our framework allows for a precise differentiation among various types of functions: selfish utility, social utility, game utility, and material payoff functions.

In a departure from previous frameworks in which ex ante inequality aversion is incompatible with expected utility, our framework reveals a compatibility between any kind of ex ante social preferences and expected utility in games. However, this compatibility comes with specific conditions. For example, we show that consistency between ex ante inequality aversion and expected utility can only emerge if the individual's selfish preferences exhibit a distinct aversion to randomization. Specifically, for a decision-maker with ex ante inequality averse social preferences to be consistent with expected utility, her aversion to randomization in her selfish utility must counterbalance her social preference for randomization.

In our main result, presented in Theorem~\ref{thm:main}, we provide a characterization of the functional form of social utility in relation to selfish utility and game utility. Notably, when a player's preferences in \textit{both} the game and the decision problem satisfy expected utility, significant constraints are imposed on the feasible forms of her social utility. Only under these conditions, do our results corroborate key insights from existing literature \citep{diamond1967,machina1989,fudenberg2012,saito2013}. In particular, ex ante inequality aversion is incompatible with the expected utility function, as shown by \citet{fudenberg2012} in a simple `coin-flip' (two-state) model with equal probabilities, where a weak form of the independence axiom conflicts with ex ante fairness.\footnote{\citet{fudenberg2012} also establish an incompatibility result regarding ex post fairness---our study does not focus on ex post social preferences.} Finally, we observe that (i) the social utility function is unique up to a positive affine transformation, and (ii) a positive affine transformation of both the game utility and selfish utility functions leaves social preferences unchanged.

\section{The setup}

In this section, we formally define the concept of a game and a decision problem that is induced by the game.

Consider two players, Alice (player 1) and Bob (player 2), with their respective sets of pure strategies denoted by $A = \{a_1, a_2, \ldots, a_m\}$ and $B = \{b_1, b_2, \ldots, b_n\}$. The material payoff to player $i\in\{1,2\}$, resulting from a chosen pair of pure strategies, are defined by the function $m_i: A \times B \rightarrow \mathbb{R}$. 

We let $\Delta A$ and $\Delta B$ denote the sets of all probability distributions over $A$ and $B$, respectively. A mixed strategy for Alice and Bob is then represented by $\sigma \in \Delta A$ and $\tau \in \Delta B$, respectively. The material payoffs are extended to the domain of pairs of mixed strategies, with the function $m_i: \Delta A \times \Delta B \rightarrow \mathbb{R}$ denoting player $i$'s \textbf{expected material payoff}. We further define $m$ as the profile $(m_1, m_2)$, such that for any mixed strategy profile $(\sigma,\tau)$, we have $m(\sigma, \tau) = (m_1(\sigma, \tau), m_2(\sigma, \tau))$.

Let $u: (\Delta A \times \Delta B) \times \{g,d\} \rightarrow \mathbb{R}$ denote Alice's utility function, which is \textit{not} necessarily a von Neumann-Morgenstern (vNM) expected utility function.  For each strategy profile $(\sigma, \tau) \in \Delta A \times \Delta B$, $u(\sigma, \tau, g)$ represents Alice's utility in the game with Bob and $u(\sigma, \tau, d)$ represents her utility in the associated decision problem against Nature. We define $u_g(\cdot,\cdot)=u(\cdot,\cdot,g)$ and $u_d(\cdot,\cdot)=u(\cdot,\cdot,d)$ and call them Alice's \textbf{game utility} and \textbf{selfish utility}, respectively. Note that $u(\sigma, \tau, \cdot)$ and $m_1(\sigma, \tau)$ may differ, as material payoffs do not necessarily coincide with utils. Furthermore, let $v: \Delta A \times \Delta B \rightarrow \mathbb{R}$ be Bob's utility function.  

With these definitions, a \textbf{game} in mixed extension is denoted as $G = (\Delta A, \Delta B, u_g,$ $v, m_1, m_2)$, and its \textbf{associated decision problem} is $D = (\Delta A, \Delta B, u_d, m_1)$. There are similarities and differences between the game $G$ and its associated decision problem $D$:
\begin{enumerate}
    \item In both $G$ and $D$, Alice possesses an identical set of mixed strategies.
    \item $D$ is a game against Nature whose set of mixed strategies coincides with mixed strategies of Bob in $G$.
    \item Additionally, Alice receives the same material payoffs, denoted by $m_1$, in both $G$ and $D$. This establishes a key link between $G$ and $D$ and motivates the definition of social utility below.
    \item In contrast, Alice's utilities in $G$ may differ from those in $D$.
\end{enumerate}

\noindent
We proceed to exemplify the above theoretical framework using a game illustrated in Figure~\ref{fig:illustrative_payoffs}. In this game, the material payoffs are measured in monetary units. Note that Alice's material payoffs remain the same whether she is in the game or the associated decision problem. Any divergence in her utility thus arises solely from the nature of her opponent: in the game, she competes against a human player, Bob. This is the only non-material difference between the two scenarios.

\begin{figure}[!htb]
	\[
	\begin{array}{ r|c|c| }
	\multicolumn{1}{r}{}
	&  \multicolumn{1}{c}{\text{Left}}
	& \multicolumn{1}{c}{\text{Right}} \\
	\cline{2-3}
	\text{Left} &  0,20 & 10,0 \\
	\cline{2-3}
	\text{Right}&  30,0 & 0,20   \\
	\cline{2-3}
	\end{array}
	\qquad
	\begin{array}{ r|c|c| }
	\multicolumn{1}{r}{}
	&  \multicolumn{1}{c}{\text{Left}}
	& \multicolumn{1}{c}{\text{Right}} \\
	\cline{2-3}
	\text{Left} & 0 & 10 \\
	\cline{2-3}
	\text{Right}&  30 & 0   \\
	\cline{2-3}
	\end{array}
	\]
	\caption{An illustration of a game (left) and its associated decision problem (right), where numbers represent \textit{monetary payoffs}, not utils.}
	\label{fig:illustrative_payoffs}
\end{figure}

For instance, suppose that Alice has selfish preferences. In such a case, it is reasonable to assume that her selfish utility is equal to her game utility, e.g., $u_g(L,L)=u_d(L,L)$, where the corresponding monetary payoffs are $(0,20)$ and $0$, respectively. However, if Alice is inequality averse, then it would be more appropriate to assume that she prefers the profile $(L,L)$ in the decision problem to the profile $(L,L)$ in the game, given that Bob receives $20$ more monetary units than Alice in the profile $(L,L)$. This indicates that Alice derives greater utility from an outcome where the distribution of material payoffs between her and Bob is more equitable. To capture these types of social preferences under risk, we introduce the following social utility function.

\begin{definition}[Social utility]
    \label{def:social_preferences}
    For each $(\sigma, \tau)\in \Delta A \times \Delta B$, the \textbf{social utility function}, $s: \Delta A \times \Delta B \rightarrow \mathbb{R}$, induced by $u_g$ and $u_d$ is defined as
    \[
    s(\sigma, \tau) = u_g(\sigma, \tau) - u_d(\sigma, \tau).
    \]
\end{definition}

\noindent
The function $s$ is defined as the difference between Alice's game utility and selfish utility. This function measures how her utility changes when engaged in a game with Bob,  thereby capturing her social preferences. Specifically, it quantifies the social utility---that is, any additional utility or disutility---that Alice experiences as a result of the material payoffs $m(\sigma, \tau)$, where Bob's material payoff is $m_2(\sigma, \tau)$ and Alice's is $m_1(\sigma, \tau)$. 

Of note, we define game utility, selfish utility, and social utility as functions of independently chosen mixed strategies (i.e., lotteries). Preferences over these mixed strategy combinations naturally induce preferences over expected monetary outcomes.

\section{Illustrative example}
\label{sec:illustrative}

In this section, we begin by examining Alice's social preferences in the context of lotteries. For illustrative purposes, we do this within the context of the situation depicted in Figure~\ref{fig:illustrative_payoffs}, that presents Alice's material payoffs in the game and the associated decision problem. We define mixed strategies $\widehat{\sigma}=(\frac{1}{2},\frac{1}{2})$ and $\widehat{\tau}=(\frac{1}{2},\frac{1}{2})$ and consider the three mixed strategy profiles $(L,\widehat{\tau})$, $(R,\widehat{\tau})$ and $(\widehat{\sigma},\widehat{\tau})$ illustrated in Figure~\ref{fig:social_preferences}. The expected monetary payoffs at these three profiles are $(5,10)$, $(15,10)$ and $(10,10)$, respectively. 

\begin{figure}[!htb]
	\[
	\begin{array}{ r|c|}
	\multicolumn{1}{r}{}
	& \multicolumn{1}{c}{\widehat{\tau}} \\
	\cline{2-2}
	\text{Left} &  -1  \\
	\cline{2-2}
        \widehat{\sigma} &  0  \\
	\cline{2-2}
	\text{Right}&  -1  \\
	\cline{2-2}
	\end{array}
  \qquad 
  \begin{array}{ r c}
	\multicolumn{1}{r}{}
	& \multicolumn{1}{c}{} \\
	  &    \\
          &  =  \\
	 &    \\
	\end{array}
   \qquad 
	\begin{array}{ r|c|}
	\multicolumn{1}{r}{}
	& \multicolumn{1}{c}{\widehat{\tau}} \\
	\cline{2-2}
	\text{Left} &  4  \\
	\cline{2-2}
        \widehat{\sigma} &  9  \\
	\cline{2-2}
	\text{Right}&  14  \\
	\cline{2-2}
	\end{array}
  \qquad 
    \begin{array}{ r c}
	\multicolumn{1}{r}{}
	& \multicolumn{1}{c}{} \\
	  &    \\
          &  -  \\
	 &    \\
	\end{array}
   \qquad 
	\begin{array}{ r|c|}
	\multicolumn{1}{r}{}
	& \multicolumn{1}{c}{\widehat{\tau}} \\
	\cline{2-2}
	\text{Left} &  5  \\
	\cline{2-2}
        \widehat{\sigma} &  x  \\
	\cline{2-2}
	\text{Right}&  15  \\
	\cline{2-2}
	\end{array}
 \]
	\caption{Values from Alice's social utility (left), game utility (middle), and selfish utility (right). These values represent \textit{utils}. In line with Definition~\ref{def:social_preferences}, equality and subtraction operations are performed component-wise.}
	\label{fig:social_preferences}
\end{figure}

Let Alice be ex ante inequality averse in that she incurs a social disutility of $1$ only if expected monetary payoffs are unequal. Thus, she experiences a disutility of $1$ from the profiles $(L,\widehat{\tau})$ and $(R,\widehat{\tau})$, but does not incur any social disutility from the profile $(\widehat{\sigma},\widehat{\tau})$. Her social utility at these three profiles is then given by 
\[
s(L,\widehat{\tau}) = -1,\quad s(R,\widehat{\tau}) = -1 \quad\textrm{and}\quad s(\widehat{\sigma},\widehat{\tau}) = 0.
\]

\noindent
In the decision problem, we assume Alice's selfish utility values for the two profiles where she responds to $\widehat{\tau}$ with a pure strategy are
\[
u_d(L,\widehat{\tau}) = 5 \quad\textrm{and}\quad u_d(R,\widehat{\tau}) = 15.
\]

\noindent
By the assumptions we made so-far, and by the definition of $s$, this determines the game utilities at the latter two profiles as 
\[
u_g(L,\widehat{\tau}) = s(L,\widehat{\tau}) + u_d(L,\widehat{\tau}) = -1 + 5 = 4
\]
and
\[
u_g(R,\widehat{\tau}) =  s(R,\widehat{\tau})+ u_d(R,\widehat{\tau}) = -1 + 15 = 14.
\]

\noindent
Now, assuming Alice's game utility is vNM expected utility implies that
\[
\textstyle u(\widehat{\sigma},\widehat{\tau}) = \frac{1}{2} u_g(L,\widehat{\tau}) + \frac{1}{2} u_g(R,\widehat{\tau}) = 9.
\]

\noindent
Consequently, from the definition of $s$, we get
\[
\textstyle s(\widehat{\sigma},\widehat{\tau}) = u_g(\widehat{\sigma},\widehat{\tau}) - u_d(\widehat{\sigma},\widehat{\tau}) = 9 - x.
\]
Since, $s(\widehat{\sigma},\widehat{\tau})=0$, this implies that $x=9$, and hence that 
\[
u_d(\widehat{\sigma},\widehat{\tau})=9.
\]

\noindent
In summary, despite having ex ante inequality averse social preferences, Alice's game utility function can be a vNM expected utility function, provided her selfish utility from the profile $(\widehat{\sigma},\widehat{\tau})$ is $9$. In other words, Alice's social preference for randomization in the game must be counterbalanced by her selfish dislike of randomization in the decision problem.

However, if Alice's selfish utility were also expected utility, then 
\[
\textstyle u_d(\widehat{\sigma},\widehat{\tau}) = \frac{1}{2} u_d(L,\widehat{\tau}) + \frac{1}{2} u_d(L,\widehat{\tau}) = 10.
\]
This would imply that her game utility for the profile $(\widehat{\sigma},\widehat{\tau})$ would have to be $10$, contradicting to the assumption that her game utility $u_g$ is a vNM expected utility function.

\section{Result and discussion}
\label{sec:results}

As our main theorem establishes, the restriction expected utility functions imposes on social preferences is not merely an idiosyncratic feature of the particular illustrative example. Rather, it points to a more general phenomenon: the expected utility functions in the game and in the decision problem---when considered together---impose significant constraints on the form that social utility can take.

\begin{theorem}
\label{thm:main}
Let $u_g: \Delta A \times \Delta B \rightarrow \mathbb{R}$ be a vNM expected utility function, $u_d: \Delta A \times \Delta B \rightarrow \mathbb{R}$ a selfish utility function, and $s$ a social utility function induced by $u_g$ and $u_d$. Then, $u_d$ is a vNM expected utility function if and only if $s$ is bilinear.
\end{theorem}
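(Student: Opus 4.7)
The plan rests on a single structural observation: a function $f: \Delta A \times \Delta B \to \mathbb{R}$ is a vNM expected utility function if and only if it is \emph{bilinear}, i.e., $f(\cdot,\tau)$ is affine on $\Delta A$ for every fixed $\tau$, and $f(\sigma,\cdot)$ is affine on $\Delta B$ for every fixed $\sigma$. The forward direction is immediate: if $f$ is the expectation of an outcome-utility $F: A \times B \to \mathbb{R}$ under the product measure $\sigma \otimes \tau$, then $f(\sigma, \tau) = \sum_{i,j} \sigma_i \tau_j F(a_i, b_j)$, which is manifestly linear in each marginal. The converse is obtained by evaluating a bilinear $f$ at the pure-strategy vertices and extending by linearity, which recovers exactly the same expected-value formula with $F(a_i,b_j) = f(a_i,b_j)$.

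Once this equivalence is in place, the theorem reduces to elementary vector-space algebra. By Definition~\ref{def:social_preferences}, $u_g = u_d + s$, and the set of bilinear maps on $\Delta A \times \Delta B$ is closed under addition and subtraction. The hypothesis guarantees $u_g$ is bilinear, so the statement ``$u_d$ is bilinear'' and the statement ``$s$ is bilinear'' are equivalent.

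Concretely, I would organize the two directions as follows. For the ``only if'' direction, assume $u_d$ is vNM; then both $u_g$ and $u_d$ are bilinear, hence $s = u_g - u_d$ is bilinear. For the ``if'' direction, assume $s$ is bilinear; then $u_d = u_g - s$ is a difference of two bilinear functions, hence bilinear, and therefore a vNM expected utility function by the characterization above.

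The only point requiring any care is the equivalence between vNM expected utility and bilinearity in the specific setting of independently mixed strategies, since the domain $\Delta A \times \Delta B$ embeds into $\Delta(A \times B)$ only through the product map $(\sigma,\tau) \mapsto \sigma \otimes \tau$. I do not anticipate this as a genuine obstacle---it is a direct consequence of the product structure of the induced lottery---but it is the one substantive step the proof needs, and I would state it explicitly as a lemma (or an inline observation) before invoking the one-line algebraic identity $u_d = u_g - s$ to conclude.
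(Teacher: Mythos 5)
Your proposal is correct and follows essentially the same route as the paper: both arguments rest on the identity $s = u_g - u_d$ together with the fact that functions of the form $\sum_{i,j}\sigma(a_i)\tau(b_j)f(a_i,b_j)$ are closed under subtraction, the paper simply carrying out the double-sum expansion and cancellation explicitly where you invoke closure of the space of bilinear maps. Your explicit lemma identifying vNM expected utility on $\Delta A \times \Delta B$ with bilinearity makes precise a step the paper leaves implicit, but it does not change the substance of the argument.
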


\begin{proof}
\textbf{($\Rightarrow$)} 
By the definition of $s$, for each pair $(\sigma, \tau)$ in $\Delta A \times \Delta B$, we have
\begin{align*}
s(\sigma, \tau) &= u_g(\sigma, \tau) - u_d(\sigma, \tau) \\
& \stackrel{(1)}{=} \sum_{i=1}^{m} \sum_{j=1}^{n} \sigma(a_i) \tau(b_j) u_g(a_i, b_j) - \sum_{i=1}^{m} \sum_{j=1}^{n} \sigma(a_i)\tau(b_j) u_d(a_i, b_j) \\
& \stackrel{(2)}{=} \sum_{i=1}^{m} \sum_{j=1}^{n} \sigma(a_i) \tau(b_j) u_g(a_i, b_j) - \sum_{i=1}^{m} \sum_{j=1}^{n} \sigma(a_i)\tau(b_j) (u_g(a_i, b_j)-s(a_{i}, b_{j})) \\
& \stackrel{(3)}{=} \sum_{i=1}^{m} \sum_{j=1}^{n} \sigma(a_i) \tau(b_j) s(a_{i}, b_{j}).
\end{align*}
Here, Equality~(1) uses the assumption that $u_g$ and $u_d$ are vNM expected utility functions. Next, Equality~(2) follows from the definition of $s$. Finally, Equality~(3) follows from the distributive property of multiplication and cancelling out equal terms.

\textbf{($\Leftarrow$)} 
Definition of $s$ implies that 
\begin{align*}
u_d(\sigma, \tau) &=  u_g(\sigma, \tau) - s(\sigma, \tau) \\
& \stackrel{(1)}{=} \sum_{i=1}^{m} \sum_{j=1}^{n} \sigma(a_i)\tau(b_j) u_g(a_i, b_j) - \sum_{i=1}^{m} \sum_{j=1}^{n} \sigma(a_i) \tau(b_j) s(a_i, b_j) \\
& \stackrel{(2)}{=} \sum_{i=1}^{m} \sum_{j=1}^{n} \sigma(a_i)\tau(b_j) u_g(a_i, b_j) - \sum_{i=1}^{m} \sum_{j=1}^{n} \sigma(a_i) \tau(b_j) (u_g(a_i, b_j)-u_d(a_i, b_j)) \\
& \stackrel{(3)}{=} \sum_{i=1}^{m} \sum_{j=1}^{n}  \sigma(a_i) \tau(b_j) u_d(a_i, b_j).
\end{align*}
Here, Equality~(1) is obtained by our assumptions that $u_g$ is a vNM utility function and that $s$ is bilinear. Analogous to the above case, Equality~(2) follows from the definition of $s$, and Equality~(3) follows from the distributive property of multiplication and the cancellation of equal terms.
\end{proof}

\noindent
From a mathematical perspective, the theorem might seem straightforward or even trivial. However, in an economic context, the properties of function $s$ are instrumental as $s$ quantifies Alice's social preferences. Our result suggests that ex ante social preferences are not inherently in conflict with expected utility in a game, provided that selfish preferences counterbalance social preferences. When both game utility and selfish utility are vNM expected utility functions, the permissible range of social preferences becomes notably restricted. Specifically, any preferences towards more equitable expected payoffs are precluded. Under these particular conditions, our result echoes key insights from existing literature. For example, it indicates that ex ante inequality aversion would not be compatible with expected utility.

We also would like to discuss whether scaling functions $s$ or $u$ changes the underlying social preferences. Suppose that $\widehat{u}$ and $\widehat{s}$ are positive affine transformations of $u$ and $s$, respectively. Then, it is straightforward to show that for every $ (\sigma, \tau)$ and $ (\sigma', \tau')$, (i) $\widehat{s}(\sigma, \tau)\geq \widehat{s}(\sigma', \tau')$ implies that $s(\sigma, \tau)\geq s(\sigma', \tau')$ and (ii) $\widehat{u}_g(\sigma, \tau) - \widehat{u}_d(\sigma, \tau)\geq \widehat{u}_g(\sigma', \tau') - \widehat{u}_d(\sigma', \tau')$ implies that $s(\sigma, \tau)\geq s(\sigma', \tau')$. In summary, this observation states that the social utility function is unique up to a positive affine transformation and that a positive affine transformation of $u$ leaves social preferences unchanged. 

An important point to note here is that both $u_g$ and $u_d$ are scaled using identical parameters through a scaling of $u$. If $u_g$ and $u_d$ were to be scaled by different parameters, part (ii) of the observation would no longer hold. Therefore, it is essential to assume that both $u_g$ and $u_d$ are derived from a common utility function $u$ to maintain consistency between social and selfish preferences when the functions representing them are scaled. In contrast, the conclusion of Theorem~\ref{thm:main} remains valid even if $u_g$ and $u_d$ are unrelated expected utility functions, permitting them to be scaled separately.

\newpage
\bibliographystyle{chicago}
\bibliography{references}

\end{document}